\newtheorem{theorem}{Theorem}
\newtheorem{definition}{Definition}
\DeclareMathOperator*{\argmax}{arg\,max}
\begin{document}

\title{\huge{Coalition Game based User Association for mmWave Mobile Relay Systems in Rail Traffic Scenarios}}

\author{Chen~Chen,
        Yong~Niu,~\IEEEmembership{Member,~IEEE},
        Shiwen~Mao,~\IEEEmembership{Fellow,~IEEE},
        Xiaodan~Zhang,
        Zhu~Han,~\IEEEmembership{Fellow,~IEEE},
        Bo~Ai,~\IEEEmembership{Senior Member,~IEEE},
        Meilin~Gao,~\IEEEmembership{Member,~IEEE},
        Huahua~Xiao,
        and Ning~Wang,~\IEEEmembership{Member,~IEEE}

\thanks{Copyright (c) 2015 IEEE. Personal use of this material is permitted. However, permission to use this material for any other purposes must be obtained from the IEEE by sending a request to pubs-permissions@ieee.org.
This study was supported by National Key R\&D Program of China (2020YFB1806903); in part by the National Natural Science Foundation of China Grants 61801016, 61725101, 61961130391, and U1834210; in part by the State Key Laboratory of Rail Traffic Control and Safety (Contract No. RCS2021ZT009), Beijing Jiaotong University; and supported by the open research fund of National Mobile Communications Research Laboratory, Southeast University (No. 2021D09); in part by the Fundamental Research Funds for the Central Universities, China, under grant number 2020JBZD005; and supported by Frontiers Science Center for Smart High-speed Railway System; in part by the State Key Laboratory of Rail Traffic Control and Safety, Beijing Jiaotong University, under Grant RCS2019ZZ005; in part by the Fundamental Research Funds for the Central Universities 2020JBM089; in part by the Project of China Shenhua under Grant (GJNY-20-01-1); in part by ZTE Corporation, and State Key Laboratory of Mobile Network and Mobile Multimedia Technology; in part by US NSF CNS-2128368, CNS-2107216 and Toyota; in part by the NSF under Grant ECCS-1923717. (\emph{Corresponding author: Y. Niu.})}

\thanks{C. Chen is with the State Key Laboratory of Rail Traffic Control and Safety, Beijing Jiaotong University, Beijing 100044, China, and also with the Frontiers Science Center for Smart High-speed Railway System, Beijing Jiaotong University, Beijing 100044, China (e-mail: 20120026@bjtu.edu.cn).}

\thanks{Y. Niu is with the State Key Laboratory of Rail Traffic Control and Safety, Beijing Jiaotong University, Beijing 100044, China, and also with the National Mobile Communications Research Laboratory, Southeast University, Nanjing 211189, China (e-mail: niuy11@163.com).}

\thanks{S. Mao is with the Department of Electrical and Computer Engineering, Auburn University, Auburn, AL 36849-5201 USA (e-mail: smao@ieee.org).}

\thanks{X. Zhang is with the School of Management, Shenzhen Institute of Information Technology, Shenzhen 518172, China (e-mail: zhangxd@sziit.edu.cn).}

\thanks{Z. Han is with the Department of Electrical and Computer Engineering in the University of Houston, Houston, TX 77004 USA, and also with the Department of Computer Science and Engineering, Kyung Hee University, Seoul, South Korea, 446-701. (e-mail: zhan2@uh.edu).}

\thanks{B. Ai is with the State Key Laboratory of Rail Traffic Control and Safety, Beijing Jiaotong University, Beijing 100044, China, and also with the Beijing Engineering Research Center of High-speed Railway Broadband Mobile Communications,Beijing Jiaotong University, Beijing 100044, China (e-mail: aibo@ieee.org).}

\thanks{M. Gao is with the Tsinghua Space Center, Tsinghua University, Beijing 100091, China (e-mail: gaomeilin@mail.tsinghua.edu.cn).}

\thanks{H. Xiao is with the ZTE Corporation and State Key Laboratory of Mobile Network and Mobile Multimedia Technology, Shenzhen 518057, China (e-mail: xiao.huahua@zte.com.cn).}

\thanks{N. Wang is with the School of Information Engineering, Zhengzhou University, Zhengzhou 450001, China (e-mail: ienwang@zzu.edu.cn).}
}

\maketitle

\begin{abstract}
Rail transportation, especially, high-speed rails (HSR), is an important infrastructure for the development of national economy and the promotion of passenger experience. Due to the large bandwidth, millimeter wave (mmWave) communication is regarded as a promising technology to meet the demand of high data rates. However, since mmWave communication has the characteristic of high attenuation, mobile relay (MR) is considered in this paper. Also, full-duplex (FD) communications have been proposed to improve the spectral efficiency. However, because of the high speed, as well as the problem of penetration loss, passengers on the train have a poor quality of service. Consequently, an effective user association scheme for HSR in mmWave band is necessary. In this paper, we investigate the user association optimization problem in mmWave mobile relay systems where the MRs operate in the FD mode. To maximize the system capacity, we propose a cooperative user association approach based on coalition formation game, and develop a coalition formation algorithm to solve the challenging NP-hard
problem. We also prove the convergence and Nash-stable property of the proposed algorithm. Extensive simulations are done to show the system performance of the proposed scheme under various network settings. It is demonstrated that the proposed distributed low complexity scheme achieves a near-optimal performance and outperforms two baseline schemes in terms of average system throughput.
\end{abstract}

\begin{IEEEkeywords}
Full-duplex, game theory, mmWave, high speed rail, user association
\end{IEEEkeywords}

\section{Introduction}\label{S1}

As an important part of rail transportation, high speed rail (HSR) systems are able to provide reliable and convenient services to travelers. HSRs are booming and deployed all over the world. The total mileage of HSR will reach 54,550 km by 2025. As a result, it is extremely important to provide high quality wireless services for passengers. Since the demand of bandwidth resource is rapidly increasing, the current 4G network can no longer provide fast and stable services for passengers to enjoy Internet services onboard. Consequently, it is urgent to bridge the gap between offering high data rates for users and the existing low-rate communication networks for HSRs.

As a technology adopted by the fifth-generation (5G) wireless communications and beyond, millimeter wave (mmWave) communication can support multi-gigabit wireless services such as online gaming, video calling and so on~\cite{J1}. Thus, mmWave communications provides an effective solution to the problem of train-to-ground wireless broadband transmissions to enable stable and reliable communication services for travelers. Now, in some HSR communication systems worldwide, the technology has been applied and achieved superior performance, such as the Japanese Shinkansen. As for its wide bandwidth and narrow beamwidth, adopting mmWave technology for HSRs not only is well aligned with the current development trends of wireless communication networks, but has great potential to satisfy the demands of travelers for broadband wireless access. It has been recognized as an efficient means to enhance the quality of HSR services.

In Fig.~\ref{fig1}, we illustrate a HSR scenario where a mmWave mobile relay system is deployed. Since mmWave communications have short wavelength and suffer from high propagation loss. So the BS operating in the mmWave mode has the problem of small coverage. Thus, in this paper, we adopt the directional antennas that are leveraged to expand the rail coverage by utilizing the beamforming technology. Also, as the train's body is made of metal material, there is severe penetration loss to receiving signal power. However, by introducing MRs, a two-hop communication link is formed, which can effectively avoid the adverse effects of penetration loss and small coverage. In addition, advanced antenna technology and effective algorithm can be used to reduce the influence of Doppler shift and frequent handover, so as to ensure the effectiveness and reliability of train-to-ground communication.

Relays have been considered in the 3GPP Release 10 standard. In 3GPP Release 11, mobile relay (MR) support is added, which provides a theoretical basis for adoption of relays in high-speed trains. Compared with a direct channel, the two-hop relay link can achieve a high system throughput~\cite{a1}. Thus, mobile relay systems are able to provide stable and multi-gigabit services for HSR passengers. To further enhance the system performance, full-duplex (FD) technology has got great concern in the research field.
In theory, the spectral efficiency can be doubled by the the FD technology comparing with the half-duplex (HD) technology. However, due to the leakage of the transmitted signal to the receiver circuit that is called self-interference (SI), full-duplex communication systems were considered as impracticable before~\cite{FD1}. With the SI cancellation technology developing, such as antenna separation~\cite{SIC1}, beamforming-based techniques~\cite{SIC2}, and balanced/unbalanced (BALUN) cancellation~\cite{FD}, FD is now regarded as feasible. However, the residual self-interference (RSI) is still inevitable and SI cannot be completely eliminated in practice. Integrating FD technology into the MRs makes it efficient and feasible to improve the efficiency of bandwidth resource usage and enhance the system performance in terms of system sum rate. Therefore, MRs operate in the FD mode in our investigated system. There are one or more MRs and one base station (BS), and MRs are connected to the backbone network via the BS. Since mobile users can choose to be associated with the BS or MRs, user association problem is a key component of the mmWave MR system in HSR scenarios.

In the high-speed train, the users are relatively concentrated and have the characteristics of the same speed and direction of movement. In addition, different from users in traditional communication network, users in HSR communication have the characteristics of relativity and absoluteness of motion, that is, users are stationary relative to the train compartment, but they move at high speed relative to the ground. Therefore, the MR can be fully used to centrally control and manage the wireless access and resource allocation of the users in the high-speed train compartment and reduce the signaling cost.

However, with the emergence of a large number of new services and the continuous expansion of mobile application scenarios, the contradiction between the ubiquitous high-speed access demand of users and the complex wireless environment is gradually highlighted, which also brings huge challenges. Since the users are accustomed to high quality service on the ground, they also need to access the Internet during a long travel, so as to work, study and entertainment. Nevertheless, as the train runs at high speed, frequent handovers occur in the HSR scenarios. Assuming that the speed of the train is 350km/h and the radius of the cell is 1 to 2km, handover occurs every 10 to 20s~\cite{ho}. In addition, in the HSR communication system, the high-speed movement of the train relative to the BS will lead to more serious doppler shift and the fast time-varying fading of the channel. The above mentioned three factors caused by high speed will seriously affect the overall quality of service (Qos) of the HSR communication system. Fortunately, by leveraging the continuous location information of the trains, the Doppler shift can be pre-compensated according to~\cite{de}. Moreover, In traditional networks, user association is normally achieved based on the Reference Signal Received Power (RSRP) or Signal to Interference-plus-Noise Ratio (SINR), which are not feasible in HSR communication systems. In this case, users usually has tendency to associate with the BS, which may cause the BS associating with the majority of users and the resource allocated to each user is small. Thus, the limited resources per user from the congested BS will lead to a low system sum rate, while the resources at mobile relays can be under-utilized. In order to improve the mmWave MR system performance, users need to be actively associated with the under-loaded MRs that can provide better services and reduce the burden on the BS.
To sum up, with the improvement of HSR speed and the diversification of communication services, the requirements for communication services in HSR environment are getting higher and higher. How to provide high quality and efficient communication service for high-speed railway users has become a key problem to be solved urgently. As a result, it is necessary to study on optimizing the user association in rail traffic scenario.

In this study, we concentrate on optimizing user association in the mmWave MR system where the MRs working in the FD mode. In the user association problem, as the BS and each MR occupies different band and users associated with the BS or the same MR adopt time division multiple access (TDMA) for transmissions, there is only RSI from the MRs. As a result, the more the users associated with the BS or the same MR, the less the bandwidth resource is allocated to each user, and the transmission rate will decrease. Thus, the interaction among users seriously influences the system performance. To this end, we find the coalition formation game theory,
which can make the players (i.e., the users) try their best to cooperate to achieve the maximum system throughput~\cite{W1,W2}. Thus, a novel user association optimization scheme based on the coalition formation game is proposed in this paper. Unlike the other existing coalition formation game algorithm based user association, we use a key preference order, called the utilitarian order, which aims to maximize the total utility, while the other user association algorithms focus on maximizing the individual payoffs based on the Pareto preference order. The main contributions of this paper are summarized as follows.

\begin{itemize}
\item We formulate the problem of user association in the mmWave MR system as a nonlinear 0-1 programming problem. Users can choose to associate with the BS or a MR. We aim to optimize the user association problem in terms of average system throughput.

\item We propose an efficient coalition game based algorithm with low computational complexity to maximize the average system throughput. Then, we prove the properties of the proposed algorithm, such as convergence. We also give a proof that the proposed algorithm is Nash-stable.

\item In the section of performance evaluation, we do simulations with kinds of parameters and demonstrate the proposed coalition game based scheme outperforms two baseline user association strategies, and achieves a performance that is close to the optimal solution. Moreover, we do simulations on conditions of FD and HD, and provide insights on the impact of several essential network settings on the system performance.
\end{itemize}

The following describes the organization of the rest paper. Section~\ref{S2} gives an overview of related work. In Section~\ref{S3}, we introduce the system model and formulate the user association optimization problem. The coalition game model, algorithm and property analysis are introduced in Section~\ref{S4}. Section~\ref{S5} presents our performance evaluation. Finally,
Section~\ref{S6} gives the conclusion of this paper.

\section{Related Work}\label{S2}

There has been considerable work on mmWave communications in railroad transportation systems, especially for HSRs. In~\cite{J23}, the authors minimized the communication interruption time during handover on the basis of a distributed antenna system-based (DAS) mmWave communication system that is proposed for HSRs. To improve the area of rail coverage, Liu \emph{et al.}~\cite{a5} proposed a method based beamforming, and a semidefinite relaxation (SDR) method is developed to solve the optimization problem. For Long Term Evolution for Railways(LTE-R), a multiple-input multiple-output (MIMO) channel model was adopted~\cite{a6}.
Moreover, in~\cite{a7}, the authors developed a power allocation optimization algorithm of train-ground mmWave communication for HSRs to achieve the goal of energy-efficient. In~\cite{J9}, a fast initial access (IA) scheme was proposed for dual-band HSRs wireless networks. The authors exploit the historical IA results and regularity of train mobility to enhance the IA process of HSR communications. In~\cite{a4}, a blind beam alignment scheme based on deep reinforcement learning was presented to provide high rate services. In~\cite{a2}, sub-6-GHz bands and mm-wave bands were integrated as a network architecture to improve the area of network coverage while increasing throughput. In~\cite{J7}, a disaster radar detection approach was presented for train safety. In~\cite{J8}, the authors demonstrated that both orthogonal frequency-division multiplexing (OFDM) and the single carrier (SC) technology should be deployed to support train-trackside mmWave systems.

The research of user association has been done in various wireless networks. In~\cite{a8}, the authors provided a survey of user association mechanisms in mmWave communications. In HetNets, to maximize the transmission rate of the downlink, a dynamic user association was proposed for in~\cite{J10}. The authors used convex optimization method to derive an upper bound on the downlink, and then introduced a heuristic user association rule, which could approach the upper bound mentioned before and was also simple. In~\cite{J11}, the rate maximization problem under multi-cell and multi-user QoS constraints was formulated as a mixed integer nonlinear programming problem, and then feasible infeasible-interior-pointmethod (IIPM) was applied to optimize the system throughput. In~\cite{J12}, the authors jointly optimize the problem of user association and resource allocation in wireless HetNets, which is aimed to achieve high system capacity. A centralized iterative algorithm was proposed for collaborative optimization and joint resource allocation of user association, channel allocation, beam forming, and power control. In~\cite{J13}, the authors proposed a user-base station association scheme with coordinate descent under the proportional fair goal. This distributed scheme jointly considered beamforming, power control, and user association. In~\cite{J14}, by using the method of the convex optimization, the primal deterministic user association was relaxed to a new fractional association.  In~\cite{J15}, in order to minimize the delay of service requested by the users in femtocell networks, approximation algorithms were introduced to achieve a proven performance bound by formulating user association as a combinatorial problem.

Given the moving direction and route of HSR and users who are in the carriage are deterministic, and the moving track and position are regular and predictable, artificial intelligence (AI) such as machine learning (ML) is used to handle the user association problem, where users and mmWave BSs are treated as samples and classes~\cite{ML}. Simulation results showed that this proposed scheme had a superior performance while with low complexity. Moreover, by taking mobility into consideration, authors in~\cite{MB} proposed an user association strategy to overcome frequent handovers which are caused by the high speed of the trains. However, all these prior works did not integrate FD technology with MRs. In addition, given the high-mobility of HSRs, the Doppler effect is an important factor that can not be ignored. However, as the wireless channel is mainly line-of-sight(LOS) in HSR mmWave communications systems, there is only the Doppler shift~\cite{DS}. And by leveraging the continuous location information of the trains, authors in~\cite{de} proposed a scheme to pre-compensate the Doppler shift. Therefore, the Doppler effect was not taken into consideration in this paper. To sum up, although the prior works have made considerable advances, the existing user association schemes may not be effective for HSR wireless communication systems, which supports a variety of applications, some of which have stringent delay requirements (e.g., train safety related applications).

Game theory has been highly applied in the user association optimization problem. In~\cite{J16}, a non-cooperative game was introduced and the utility was the system sum rate of all the users. In~\cite{J17}, the authors introduced the coalitional games
to address wireless communications and networking problems. In~\cite{J18}, a bargaining problem based user association was proposed to maximize system utility, and the authors also proposed threshold minimum rate for the users to maintain fairness for all users.
By extending~\cite{J18}, in~\cite{J19}, human-to-human communication was labeled as first order business and machine-to-machine communication as second order business according to the proposed opportunistic user association algorithm. Furthermore, the proposed algorithm focused on allocating resource fairly for machine-to-machine traffic without lowering the QoS of human-to-human traffic. Moreover, a many-to-one matching game was proposed to solve the downlink user association optimization problem in HetNets~\cite{J20,J21,J22}, where users and BSs matched each other on basis of the predetermined principles. In~\cite{J20}, on basis of users utility functions, which accounted for not only the system throughput but also the fairness to cell edge users, BSs and users ranked each other. In~\cite{J21}, the handover failure probability and heterogeneous QoS requirements of users were mainly considered while designing utility function based user association. On basis of the problem formulated in~\cite{J21},
the user's quality of experience (QoE) in terms of mean opinion scores (MOS) and multimedia data services ware taken into consideration in ~\cite{J22}.

As much as we know, these prior works are all on basis of half-duplex communications, and do not consider the challenges in
mmWave mobile relay systems. So there is considerable space for performance improvement in terms of average system throughput.
In this paper, we study the user association optimization problem in mmWave mobile relay system with FD communications, and propose an algorithm based on coalition formation game to achieve excellent performance on average system throughput.

\section{System Overview and Problem Formulation} \label{S3}

In this section, we introduce the system model in~\ref{S3-1} firstly. Then, an example is illustrated in~\ref{S3-2} to make the proposed system model more clear. Finally, we formulate the
user association optimization problem in~\ref{S3-3}.

\subsection{System Model}\label{S3-1}

In this study, we consider a train-ground communication network with mobile relays operating in the FD mode for downlink transmissions, as shown in Fig.~\ref{fig1}. There is a BS on the roadside, and multiple MRs mounted on the train. The MRs are connected to the backbone network via the BS. The system serves both users outside the train and onboard the train. Part of the bandwidth resource is allocated to the BS for user association and communication with the MRs, and the rest is equally allocated to the MRs for user association. Moreover, the proportion of bandwidth resource allocation is fixed. We consider an FD-enabled mmWave mobile relay system. Considering the specifics of mmWave communications system where MRs operating in the full-duplex mode, each MR has two directional antennas, one is for transmitting and the other for receiving. The notation used in this paper is summarized in Table~\ref{tab:1}.

\begin{figure}[!t]
	\begin{center}
		\includegraphics[scale=0.4]{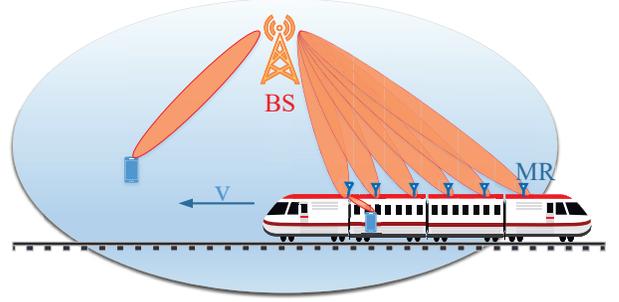}
	\end{center}
	\caption{Illustration of user association in the
		mmWave mobile relay system.} \label{fig1}
\end{figure}

In our system model, given the specifics of HSR mmWave communications systems, where the wireless channel is mainly LOS,
we use the Nakagami channel model. The channel coefficient corresponding to the BS and user $l$ is denoted as $h_{b,l}$ (i.e., the small-scale fading that obeys Nakagami-$m_i$ distribution with parameter $m_i$ as the fading depth), given by $h_{b,l}={|h_{0}|^{2}\cdot{d}^{- \delta}}$, where $|h_{0}|^{2}$ is fading power of the channel, $d$ is the user-BS distance, and $\delta$ is the path loss component. Let $G_{t}(b,l)$ and $G_{r}(b,l)$ be the transmit and receive antenna gains, respectively, and $P_{b}$ the transmit power of the BS. For user $l$ associated with the BS, the received power is given by:
\begin{equation}
P_{r}(b,l)=k_{0}|h_{0}|^{2}G_{t}(b,l)G_{r}(b,l){d}^{- \delta}P_{b}, \label{eq1}
\end{equation}
where $k_{0}$ is a constant coefficient proportional to ${(\frac{\lambda}{4\pi})}^2$ with $\lambda$ being the  wavelength. Further, thermal noise is modeled as
\begin{equation}
P_{noise}=N_{0}{\alpha_{i}W}, \label{eq2}
\end{equation}
where $N_{0}$ is the one-sided power spectral density of White Gaussian noise, $\alpha_{i}$ is the proportion of bandwidth allocation to the BS and MRs, and $W$ is the total channel bandwidth. Note that the $n$ MRs are indexed by $i=0,1,...n-1$ and the BS is indexed by $n$. In this paper, the spectrum allocation $\alpha_{i}$ is fixed and we will jointly optimize spectrum allocation and user association in our future works.

\begin{table}[!t]
\begin{center}
\caption{Notation}
\vspace{-0.05in}
\begin{tabular}{ll}
\toprule
Symbol & Description \\
\midrule
$n$ & number of MRs  \\
$A$ & number of users associated with the BS \\
$\textbf{A}$ & the set of users associated with the BS\\
$D$ & number of users associated with the MRs \\
$\textbf{D}$ & the set of users associated with the MRs \\
$x_{l,i}$ & whether the BS or MRs $i$ associated with user $l$  \\
$P_{r}(b,l)$  & received power at user $l$ from BS \\
$P_{r}(i,l)$  & received power at user $l$ from $MR_{i}$ \\
$G_{t}(b,l)$ & transmit antenna gain of BS associated with user $l$ \\
$G_{r}(b,l)$  & receive antenna gain of user $l$ associated with BS\\
$G_{t}(i,l)$ & transmit antenna gain of $MR_{i}$ associated with user $l$ \\
$G_{r}(i,l)$  & receive antenna gain of user $l$ associated with $MR_{i}$\\
${\beta}$ & self interference cancellation of MR \\
${\alpha_{i}}$ &  bandwidth allocation fraction to the BS or MRs $i$ \\
\bottomrule
\end{tabular}
\label{tab:1}
\end{center}
\end{table}

The BS and the MRs occupy different bands. Further, the users who are associated with the BS or the same MR adopt TDMA for transmission. Thus, there is no mutual interference for the BS-user transmissions, and the received signal to noise ratio (SNR) is given by
\begin{equation}
\mbox{SNR}=\frac{P_{r}(b,l)}{P_{noise}}, \label{eq3}
\end{equation}
Similarly, for user $l$ who is associated with MR $i$, the received power can be written as
\begin{equation}
P_{r}(i,l)=k_{0}|h_{0}|^{2}G_{t}(i,l)G_{r}(i,l){d}^{- \delta}P_{i}, \label{eq4}
\end{equation}
where $G_{t}(i,l)$ and $G_{r}(i,l)$ are transmit and receive antenna gain, respectively, $d$ is the distance from MR $i$ to the user $l$, and $P_{i}$ is the transmission power of MR $i$.

Since FD transmissions are used by the MRs, unlike the BS case, there is not only thermal noise, but also RSI. Specifically, $\mbox{RSI}=0$ indicates the most perfect and ideal self-interference cancellation case, and $\mbox{RSI}=1$ represents the absence of self-interference cancellation. The received SINR at user $l$ from MR $i$ is given by
\begin{equation}
\mbox{SINR}=\frac{P_{r}(i,l)}{P_{noise}+{\beta}P_{i}}. \label{eq5}
\end{equation}
where $\beta$ characterizes the SI cancellation ability of the MRs, $P_{i}$ is the transmit power of MR $i$. and the RSI can be denoted as ${\beta}P_{i}$. Without loss of generality, the transmit power for all the MRs is assumed to be same in this paper.

\subsection{An Illustrative Example}\label{S3-2}

In Fig.~\ref{fig2}, we provide an illustrative example of user association in the mmWave mobile relay system. There are two mobile relays plus one BS in this example, and thus, the total bandwidth resource $W$ is divided into three parts. We use rectangles of different colors to represent the three sub-bands. For each user, it is impractical to be associated with both the BS or the MRs.
As a result, we divide all the users into two categories: (i) the users outside the train associated with the BS and (ii) users on the train associated with the MRs. From Fig.~\ref{fig2}, we can see that users $u_{1}$, $u_{2}$, and $u_{3}$ are associated with MR1 and occupy the same spectrum band that is also used by MR1 to communicate with the BS (marked by yellow blocks). Similarly, $u_{4}$ and $u_{5}$ are associated with MR2 and occupy the other sub-band marked by green blocks, which is also used by MR2 to communicate with the BS. Finally, users $u_{6}$, $u_{7}$, $u_{8}$, and $u_{9}$ utilize the remaining bandwidth (marked in purple)
and are associated with the BS. Due to FD transmissions, there is RSI for those users associated with the MRs, which limits the performance of such users. On the other hand, mutual interference is negligible since the users who occupy the same sub-band adopt TDMA for transmissions.

\begin{figure}[!t]
	\begin{center}
		\includegraphics*[width=1\columnwidth,height=2.7in]{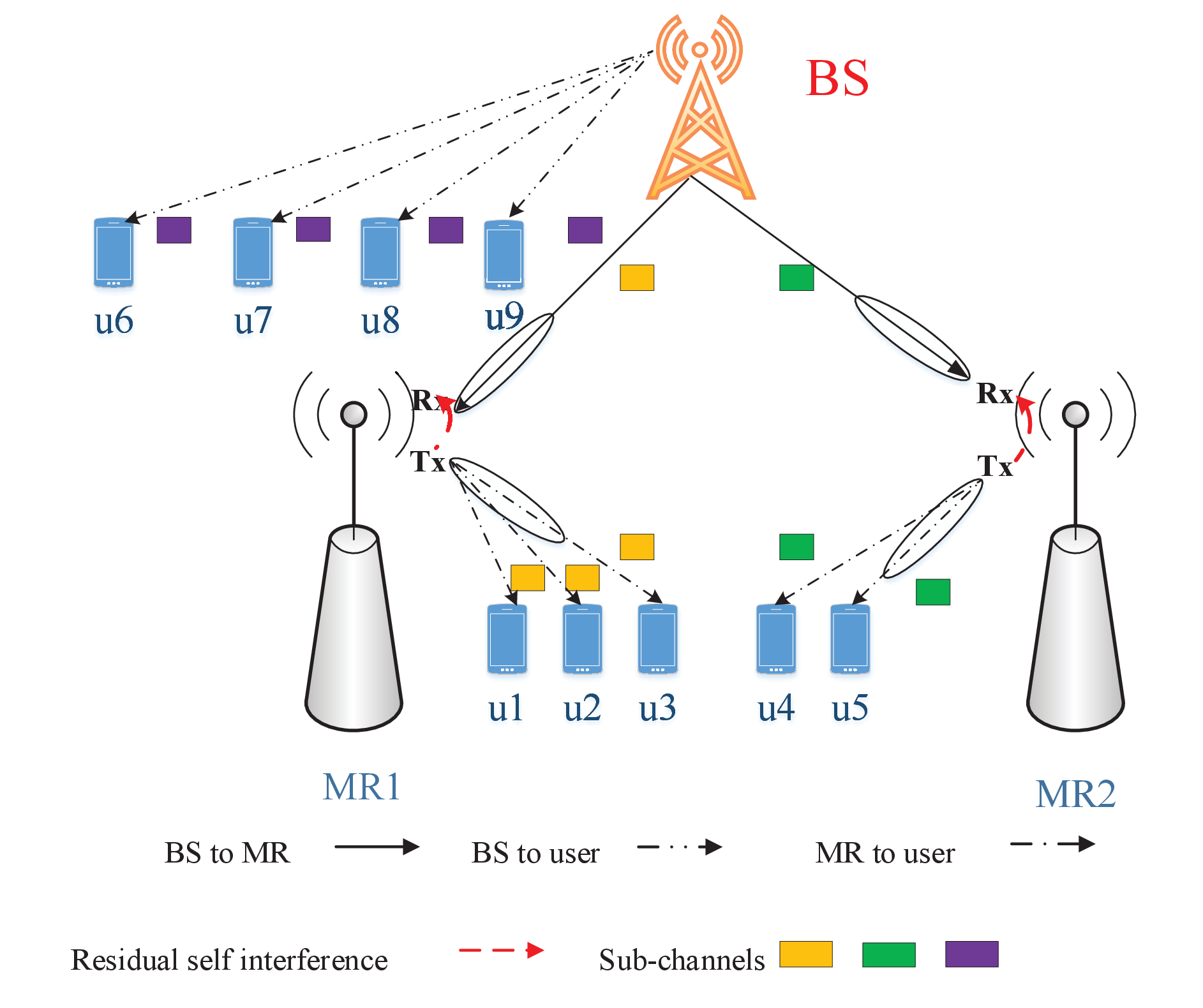}
	\end{center}
	\caption{An example of user association in the
             mmWave mobile relay system.} \label{fig2}
\end{figure}

\subsection{Problem Formulation}\label{S3-3}

We assume the proposed system has $n$ onboard MRs and one BS on the ground. There are $A$ users associated with the BS, denoted by $\textbf{A}=\{a_{1}, a_{2}, ..., a_{A}\}$. Moreover, we denote the set of users associated with the MRs by $\textbf{D}$, written as $\textbf{D}=\{d_{1}, d_{2}, ..., d_{D}\}$. The set of users associated with MR $i$ is denoted as $\textbf{D}_{i}$, $i=0, 1, ..., n-1$. We define binary variable for each user $l$, as $x_{l,i}$, $l \in \textbf{A} \cup \textbf{D}$ to indicate user association with the BS or MR $i$. If user $l$ is associated with the BS or MR $i$, we have $x_{l,i}=1$; otherwise, we have $x_{l,i}=0$.
The fraction of bandwidth allocated to an MR $i$ is denoted by ${\alpha_{i}}$, $i=0, 1, ..., n-1$, and the faction of bandwidth allocated to the BS is denoted by $\alpha_{n}$. For the users associated with the BS, from the Shannon's formula, the transmission rate of user $a$, denoted by $R_{a}$, is
\begin{equation}
R_{a}=x_{a,n}\alpha_{n}W{\log}_2\left({1+\mbox{SNR}_a}\right).
\label{eq6}
\end{equation}
For the users associated with MR $i$, there is RSI between the user and MR $i$. The transmission rate of user $d$ can be calculated as
\begin{equation}
R_{d}=x_{d,i}\alpha_{i}W{\log}_2\left({1+\mbox{SINR}_d}\right), i=0, 1, ..., n-1.
\label{eq7}
\end{equation}

The average transmission rate of the overall system can be expressed as
\begin{equation}
R=\frac { \sum_{a \in {\textbf{A}}}R_{a} + \sum_{i=0}^{n-1} \sum_{d\in{\textbf{D}_{i}}}R_{d} } {|\textbf{A}| + \sum_{i=0}^{n-1}|\textbf{D}_{i}| }.
\label{eq8}
\end{equation}

The system average throughput in~\eqref{eq8} depends on user association as indicated by $x_{l,i}$'s. We define a binary decision matrix $\mathbf{X}$ that composes of user association decision variables $x_{l,i}$, $l\in \textbf{A} \cup \textbf{D}$ and $i \in \{0, 1, ..., n-1, n\}$. And $R(\mathbf{X})$ denotes the average system throughput. It is the objective function that we aim
to optimize in this work. User $l\in \textbf{A} \cup \textbf{D}$ is associated with only one of the MRs or the BS for transmissions. In addition, the number of users that each MR and BS can serve is limited. Finally, the sum of $\alpha_{i}$'s should be 1. Thus, the constraints can be obtained as follows.
\begin{align}
& \sum_{i=0}^{n}x_{l,i}\leq1,\ \ \forall \ l\in \textbf{A} \cup \textbf{D}
\label{eq9} \\
& \sum_{i=n}x_{l,i}\leq Y, \ \text{if} \ l\in \textbf{A}
\label{eq10} \\
& \sum_{l\in \mathbf{D}_i} x_{l,i}\leq Z, \
\ i=0, 1, ..., n-1
\label{eq11} \\
& \sum_{i=0}^{n}\alpha_{i}=1,
\label{eq12}
\end{align}
where $Y$ is the maximum number of users that the BS can serve and $Z$ is the maximum number of users that each MR can serve.

Therefore, the user association problem of the FD mmWave mobile relay system is formulated as
\begin{align}
\max\,\,& R({\mathbf X}) \label{eq13} \\
\mbox{s.t.}\,\,&
(\ref{eq9}),~(\ref{eq10}),~(\ref{eq11}),\mbox{and}~(\ref{eq12}). \notag
\end{align}
This is a nonlinear integer programming problem with binary variables $x_{l,i}$. This is an NP-hard problem and more complex
than the 0-1 Knapsack problem~\cite{J24}. In the next section, we will propose an algorithm on basis of coalition game to solve the problem distributively with a low complexity. For each user in the system, the proposed algorithm makes association decision with the BS or the MR aiming to demonstrate a greater performance on the system utility function.

\section{Coalition Formation Game}\label{S4}

In this section, we will introduce the model of coalition formation game in~\ref{S4-1} firstly. Secondly, the algorithm based the coalition formation game is developed to make reasonable and effective user association decisions by cooperation in~\ref{S4-2}. Finally, we analyze the proposed algorithm and present its properties in~\ref{S4-3}.

\subsection{Coalition Formation Game: Model} \label{S4-1}

Coalition formation game is a subcategory of the cooperative game theory, which is able to determine how users cooperate to achieve the purpose of average system throughput optimization. We firstly introduce the definition of coalition formation game with transferable utility.

\begin{definition}
Coalition Formation Game with Transferable Utility: A coalition formation game with transferable utility is denoted by a pair ($C, U$). $C$ denotes the set of players and $U$ denotes the utility function. For all $C_{i}\subset C, \ U(C_{i})$ characterizes the obtained aggregate profit of coalition $C_{i}$.
\label{define1}
\end{definition}

We model user association optimization as a coalition formation game, where users have strong motivation to cooperate by forming coalitions, and further improve the network performance from the perspective of average system throughput. The set of users associated with the BS or the same MR consist a coalition. Consequently, the proposed game consist of $n+1$ coalitions and is denoted as $C=\{C_{0}, C_{1}, ..., C_{n-1}, C_{n}\}$, where $C_{i} \bigcap C_{i^{\prime}} = \emptyset$ for any $i\neq i^{\prime}$, and $\bigcup^{n}_{i=0}C_{i}=\textbf{A} \bigcup \textbf{D}$. Based on the Shannon's formulas in~(\ref{eq6}) and~(\ref{eq7}), the
average throughput of coalition $C_{i}$ can be written as
\begin{equation}
R(C_{i})=\frac {\sum_{a \in C_{i}}R_{a}+\!\!\sum_{d\in C_{i}}R_{d}}{|C_{i}|}. \label{eq14}
\end{equation}

According to the above definition, we further define the coalition formation game for~\eqref{eq13}.
\begin{definition}
Coalition Formation Game for User Association: The coalition formation game with transferable utility for user association
in the FD mmWave train-ground communication system is denoted as a pair ($C, U$) and the following items define the game.
\begin{itemize}
\item[$\bullet$] Players: \emph{a user either user associated with the BS $a \in \textbf{A}$ or a user associated with a MR $d \in \textbf{D}$.}
\item[$\bullet$] Coalition Partition: \emph{since there is one BS and n MR, $n+1$ coalitions are divided for the set of players
$C=\textbf{A} \bigcup \textbf{D}$, i.e., $C=\{C_{0}, C_{1}, ..., C_{n-1}, C_{n}\}$, where $C_{i}\bigcap C_{i^{\prime}}=\emptyset$ for $0\leq i, i^{\prime} \leq n$, $i, i^{\prime}\in \mathbb{N}$, and $i\neq i^{\prime}$.}
\item[$\bullet$] Transferable Utility:
\emph{for all $C_{i}\subset C$, $U(C_{i})$ is the total utility of the coalition $C_{i}$ that
comprises all the users associated with the same MR $i$ or the BS.
In this paper, we distribute utility to members on the basis of their individual contributions.}
\item[$\bullet$] Strategy:
\emph{on the basis of maximizing the total utility, each player
decides on whether to leave the current coalition and join another.}
\label{define2}
\end{itemize}
\end{definition}

In the above game formulation, the utility is the average system throughput, and users are inclined to form coalitions to maximize the utility of the system. The proposed coalition formation algorithm will be introduced in~\ref{S4-2}.

\subsection{Coalition Formation: Algorithm}\label{S4-2}

To achieve the purpose of maximizing average system throughput, each user needs to collaborate with other users to choose a
proper coalition to join. A preference order is needed for each user, by comparing and ordering its potential coalitions.
The preference order is set by coalition game to define different strategies for players in different scenarios. In general, under kinds of parameters, such as the approval of the other players in the same coalition and the payoff receiving from its coalition etc. We show how to obtain the preference order next.

For any player ($l \in \textbf{A} \bigcup \textbf{D}$), we use the notation ${\succ}_l$ to denote its preference order:
$C_{i} \ {\succ}_l \ C_{{i}^{\prime}}$, $i\neq {i}^{\prime}$ indicates that player $l$ prefers to join the coalition $C_{i}$ than coalition $C_{{i}^{\prime}}$. Then, according to \cite{J17}, the preference order for player $l\subseteq C_{i}, l\subseteq C_{{i}^{\prime}}$ is denoted by
\begin{equation}
C_{i} \ {\succ}_{l} \ C_{{i}^{\prime}} \Longleftrightarrow U(C_{i})+U(C_{{i}^{\prime}} \backslash l) > U(C_{i} \backslash l)+ U(C_{{i}^{\prime}}). \label{eq15}
\end{equation}

With the above definition, the players set preferences over all their candidate coalitions. On basis of the preference order,
each player chooses the coalition that can contribute to the total utility of the original and new coalitions. We define the switch operation for forming the final coalitions in the following.

\begin{definition}
	Switch Operation: For a given partition $C=\{C_{0}, C_{1}, ..., C_{n-1}, C_{n}\}$, for player $p\subset \textbf{A}\bigcup \textbf{D}$, when and only when the preference relation $C_{i} \ {\succ}_l \ C_{{i}^{\prime}}$, ${i} \neq {{i}^{\prime}}$ is satisfied, a switch operation moves player $l$ from $C_{{i}^{\prime}}$ to $C_{i}$. Then, $C$ is replaced by ${C}^{\prime}$,
as ${C}^{\prime}=(C\backslash \{C_{i}, C_{{i}^{\prime}}\})\bigcup \{{C_{{i}^{\prime}}\backslash \{l\}, C_{i}\bigcup \{l\}}\}$.
	\label{define3}
\end{definition}

The coalition formation algorithm for user association is shown in Algorithm~\ref{alg1}. To start with, we randomly form an initial partition. Then, on basis of this partition, iterations are done to achieve better performance. In Step 5, an user $l$ is selected
one by one from $i=0$ to $i=n$. Except the current coalition $C_{i}$, the user needs to select another coalition $C_{{i}^{\prime}}$. When user $l$ makes its decision, it checks not only whether the preference relation $C_{{i}^{\prime}} {\succ}_{l} C_{i}$ is satisfied, but also whether the number of users of $C_{{i}^{\prime}}$ is less than $Y-1$ or $Z-1$ in Step 6.
If both conditions are satisfied, the algorithm executes one switch operation and updates the partition in following Steps 8 and 9. Each player decides on whether to leave the current coalition when an operation of iterating occurs. As long as a final network coalition partition $C_{fin}$ has not obtained, the algorithm will keep iterating. We set specific termination condition for the coalition game algorithm. The parameter $j$ denotes the number of successive non-switching iterations. We set $j=10*(|\textbf{A}|+|\textbf{D}|)$. That is to say, for each player, there are ten chances for it to switch to another coalition.
If the switching operation has not been performed for ten times, the system is assumed to reach a stable state (see Definition~\ref{define4}) and an sub-optimal solution is obtained.

\begin{algorithm}[!t]
	\caption{The Coalition Formation Algorithm for User Association}
	\label{alg1}
	\begin{algorithmic}[1]
		\STATE Form the random initial
		partition
		$C_{ini}=\{C_{0}, C_{1}, ..., C_{n-1}, C_{n}\}$;
		\STATE Set the current partition as $C_{ini}\longrightarrow C_{cur}$ and $j=0$;
		\REPEAT
		\STATE Select one user
		$l\in \textbf{A} \bigcup \textbf{D}$
		in the pre-determined order, and its coalition is denoted as $C_{i}\subset C_{cur}$;
		\IF{$(C_{{i}^{\prime}} {\succ}_{l} C_{i})$ and $((|C_{{i}^{\prime}}|+1 \leq Y$,  ${i}^{\prime}=n)$ or $(|C_{{i}^{\prime}}|+1 \leq Z$, ${i}^{\prime}$=0, 1, ..., n-1))}
		\STATE $j=0$;
		\STATE Player $l$ leaves its current coalition $C_{i}$ and joins the new coalition $C_{{i}^{\prime}}$;
		\STATE Update the current partition
		as $(C_{cur} \backslash \{C_{{i}^{\prime}}, C_{i}\})\bigcup \{C_{i} \backslash \{p\}, C_{{i}^{\prime}}\bigcup \{p\}\}\longrightarrow C_{cur}$;
		\ELSE
		\STATE $j=j+1$;
		\ENDIF
		\UNTIL The final partition $C_{fin}$ reaches Nash-stable.
	\end{algorithmic}
\end{algorithm}

We next examine the complexity of Algorithm~\ref{alg1}. At first, the chosen player compares the total utility of the current coalition and the candidate coalition. Next, it decides on whether to leave the current coalition. Consequently, there is no more than one switch operation in each iteration. And if the total number of iterations is $I$, the complexity of the coalition formation algorithm is $\mathcal{O}(I)$.

\subsection{Coalition Formation:
Analysis}\label{S4-3}

\subsubsection{Convergence}\label{S4-3-1}

We first examine the convergence performance of the coalition formation algorithm.

\begin{theorem}
With an arbitrary initial partition $C_{ini}$ started, a final partition $C_{fin}$ of the proposed algorithm always be found. Through a finite number of switching operations, $C_{fin}$ consists of multiple disjoint coalitions.
\label{the1}
\end{theorem}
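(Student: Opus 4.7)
The plan is to show that the algorithm produces a strictly monotone sequence of total utilities along the sequence of partitions it visits, and then invoke finiteness of the partition space to conclude termination.

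First I would restate the effect of a single switch. By Definition~\ref{define3}, a switch of player $l$ from $C_{i'}$ to $C_i$ is executed only when $C_i \succ_l C_{i'}$, which by~\eqref{eq15} is equivalent to
\begin{equation*}
U(C_i) + U(C_{i'}\setminus l) \;>\; U(C_i\setminus l) + U(C_{i'}).
\end{equation*}
Since only the two coalitions $C_i$ and $C_{i'}$ are modified by the switch (all others are unchanged), summing coalition utilities over the whole partition yields
\begin{equation*}
\sum_{k=0}^{n} U(C_k^{\text{new}}) \;>\; \sum_{k=0}^{n} U(C_k^{\text{old}}).
\end{equation*}
Hence the global potential $\Phi(C) := \sum_{k=0}^{n} U(C_k)$ strictly increases after every switch performed by Algorithm~\ref{alg1}.

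Next I would bound the number of distinct values $\Phi$ can take. The set of players $\textbf{A}\cup \textbf{D}$ is finite, and the algorithm partitions it into exactly $n+1$ labeled coalitions, so the set $\mathcal{P}$ of admissible partitions is finite; in particular $|\mathcal{P}| \le (n+1)^{|\textbf{A}\cup \textbf{D}|}$. Consequently $\Phi$ takes at most $|\mathcal{P}|$ distinct values. Because the sequence of partitions visited by the algorithm induces a strictly increasing sequence $\Phi(C^{(0)}) < \Phi(C^{(1)}) < \cdots$, no partition can be revisited, and the sequence must terminate after at most $|\mathcal{P}|-1$ switches. The terminal partition is precisely $C_{\text{fin}}$, and by construction its coalitions are pairwise disjoint and cover $\textbf{A}\cup \textbf{D}$.

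The main obstacle I anticipate is not the finiteness argument itself but pinning down that utilities are genuinely \emph{strictly} increased and that no ``ties'' or oscillations can occur; this is handled cleanly by the strict inequality in~\eqref{eq15}, so the proof reduces to a standard potential/exact-potential-game argument. A short remark could also be added that the bound $|\mathcal{P}|$ is very loose and that in practice the stopping criterion $j = 10(|\textbf{A}|+|\textbf{D}|)$ in Algorithm~\ref{alg1} triggers much earlier, which naturally sets up the Nash-stability claim to be proved in the next theorem.
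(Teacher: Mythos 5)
Your proof is correct, and it actually takes a stronger route than the paper's own argument. The paper's proof merely counts the possible coalition structures via the Bell number $B(|\textbf{A}|+|\textbf{D}|)$ (remarking that fixing $n+1$ coalitions reduces the count) and then concludes termination from the boundedness of the partition space alone. On its own that step is logically incomplete: a finite state space does not rule out the algorithm cycling among partitions forever. Your potential-function argument supplies exactly the missing ingredient — since a switch is executed only under the strict utilitarian inequality in~\eqref{eq15}, and only the two coalitions $C_i$ and $C_{i'}$ are altered, the sum $\Phi(C)=\sum_{k}U(C_k)$ strictly increases at every switch, so no partition is ever revisited and termination within $|\mathcal{P}|-1$ switches follows. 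Your bound $(n+1)^{|\textbf{A}\cup\textbf{D}|}$ is also the more natural count here, because the coalitions are labeled by the BS or MR they attach to, whereas the Bell number counts unlabeled partitions into arbitrarily many blocks. The one caveat worth flagging is that the paper is ambiguous about whether $U(C_i)$ is the coalition's total or its average throughput; your argument is agnostic to that choice, since it uses only~\eqref{eq15} and the locality of the switch, but if you wanted $\Phi$ to coincide with the system objective $R(\mathbf{X})$ a remark on that identification would be needed (it is not needed for convergence itself).
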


\begin{proof}
Users are divided into $U^{\prime}$ disjoint subsets in coalition structure (CS) and each subset denotes a coalition . We
exhaustively examine all the CSs of $U^{\prime}$ users and obtain a the set $S$ of all CSs, given by $S=\{S_{1}, S_{2}, ..., S_{|S|}\}$. Based on the Bell number $B_k$, the cardinality of $S$ for the $U^{\prime}$ users can be expressed as
\begin{equation}
B(U^{\prime})=\left\{
\begin{array}{lcl}
1,&& {N^{\prime}=0}, \\
\sum_{k=0}^{U^{\prime}-1}\binom{U^{\prime}-1}{k} B_{k},&& {U^{\prime} \geq 1}.\\
\end{array} \right. \label{eq16}
\end{equation}

In Algorithm~\ref{alg1}, the number of switch operations in each iteration is up to $|\textbf{A}|$ + $|\textbf{D}|$; each user
either joins a new coalition or still stays in the original coalition. Thus, the whole number of partitions is $B(|\textbf{A}|+|\textbf{D}|)$, and since we preset the number of the BS and MRs, the number of partitions will be decreased
based on the Bell number. As the number of partitions is bounded, we can conclude that after a certain number of switch operations, a final partition $C_{fin}$ will be obtained.
\end{proof}

\subsubsection{Stability}\label{S4-3-2}

We begin with the definition of Nash-stable structure .

\begin{definition}
Nash-stable Structure:
If for all $l \in \textbf{A}\bigcup \textbf{D}$, $l\subseteq C_{i}\subset C, C_{i} \succ_{l} C_{{i}^{\prime}} \bigcup \{l\}$, for all $C_{{i}^{\prime}} \subset C, C_{{i}^{\prime}}\neq C_{i}$, a coalitional partition $C=\{C_{0}, C_{1}, ..., C_{n-1}, C_{n}\}$ is Nash-stable.
\label{define4}
\end{definition}

Based on the definition given above, we propose the theorem~\ref{the2}

\begin{theorem}
The final partition $C_{fin}$ obtained by the proposed algorithm in Algorithm~\ref{alg1} is Nash-stable.
\label{the2}
\end{theorem}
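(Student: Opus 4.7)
The plan is a direct proof by contradiction that exploits the termination rule of Algorithm~\ref{alg1}. By Theorem~\ref{the1}, the algorithm produces a final partition $C_{fin}$ after finitely many switch operations, so $C_{fin}$ is a well-defined object. I would then suppose for contradiction that $C_{fin}$ is not Nash-stable. By Definition~\ref{define4}, there must exist a player $l \in C_i \subset C_{fin}$ and another coalition $C_{i'}\neq C_i$ with $C_{i'}\cup\{l\}\succ_l C_i$. Unfolding via~\eqref{eq15}, this is equivalent to
\[
U(C_{i'}\cup\{l\}) + U(C_i\setminus\{l\}) > U(C_{i'}) + U(C_i),
\]
so transferring $l$ from $C_i$ to $C_{i'}$ strictly raises the aggregate utility of the two affected coalitions.

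The key step is to show that such an $l$ is incompatible with the algorithm having halted. Algorithm~\ref{alg1} cycles through users in a pre-determined order, and the stopping criterion only fires once the counter $j$ has reached $10(|\textbf{A}|+|\textbf{D}|)$ consecutive non-switching iterations. In particular, within any window of $|\textbf{A}|+|\textbf{D}|$ consecutive iterations, every player is considered as the candidate at least once, including the player $l$ posited above. When $l$ is examined, the conditional in line~5 tests the preference order (satisfied by hypothesis) together with the capacity cap $Y$ (if $i'=n$) or $Z$ (if $0\leq i'\leq n-1$) on the target coalition. If the capacity cap is also satisfied, a switch fires, resets $j$ to zero, and updates the partition, contradicting the assumption that $C_{fin}$ is final.

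The main obstacle is cleanly handling the capacity constraints, since Definition~\ref{define4} as written does not explicitly restrict $C_{i'}$ to coalitions that can still admit a new member. I would bridge this gap by reading Nash-stability as a statement about \textbf{admissible} deviations only: if $C_{i'}$ is already saturated ($|C_{i'}|=Y$ or $Z$), then $l$ cannot physically migrate there, so such a $C_{i'}$ is excluded from the stability requirement by necessity. Under this natural reading, every deviating $C_{i'}$ produced by the contradiction hypothesis automatically satisfies the capacity check in line~5, the switch operation must fire, and termination is contradicted. I expect a single clarifying sentence about admissibility in the Nash-stability definition to be the only subtlety needed to make the argument fully rigorous; the remainder of the proof is essentially a one-line appeal to the fact that Algorithm~\ref{alg1} only stops when no preference-improving feasible switch exists for any player.
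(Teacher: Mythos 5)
Your proof follows essentially the same route as the paper's: a contradiction argument showing that any remaining profitable deviation in $C_{fin}$ would trigger a switch operation (resetting the counter $j$), contradicting the fact that the algorithm has terminated. Your explicit treatment of the capacity caps $Y$ and $Z$ --- restricting Nash-stability to admissible deviations into non-saturated coalitions --- addresses a point the paper's own proof silently skips, and is a clarification rather than a divergence.
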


\begin{proof}
The proposed algorithm based coalition formation game will terminate when a number of successive switch operations no longer occur. Meanwhile, the system throughput has maximum value and the final partition is Nash-stable. We have
${C_{i}}^{\ast} =\argmax_{C_{i}} U(C)$, for all $C_{i}\subset C$, and ${C}^{\ast}=\{{C}^{\ast}_{0}, {C}^{\ast}_{1}, ..., {C}^{\ast}_{n-1}, {C}^{\ast}_{n}\}$ is the final Nash-stable coalition partition.

If the obtained final partition $C_{fin}$ has not reached Nash-stable, i.e., there exists a player $p \in
\textbf{A}\bigcup\textbf{D}$, $C_{i} \succ_{p} C_{{i}^{\prime}} \bigcup \{p\}$ is not matched with all $C_{{i}^{\prime}} \subset C$, ${i}^{\prime} \neq {i}$. Therefore, the switch operation from $C_{i}$ to $C_{{i}^{\prime}}$ occurs and another final partition will be formed. Since the partition still has tendency of switching, the final partition $C_{fin}$ has not been found yet, which is contradictory to $C_{fin}$ is the final partition. Consequently, the final partition $C_{fin}$ of the proposed algorithm has reached Nash-stable.
\end{proof}

\begin{table}[!t]
\begin{center}
\caption{Simulation Parameters}
\vspace{-0.05in}
\begin{tabular}{lll}
\toprule
Parameter & Symbol  & Value \\
\midrule
System bandwidth & $W$ & 2,160 MHz  \\
BS transmission power & $P_{b}$ & 30 dBm \\
MR transmission power&  $P_{n}$ & 23 dBm \\
Noise spectral density & $N_{0}$ & -134 dBm/MHz \\
Path loss exponent  &  $\delta$   &  2 \\
Carrier wavelength  &   $\lambda$   & 5 mm \\
Half-power beamwidth  &   $\theta_{-3dB}$ &    $30^{\circ}$ \\
\bottomrule
\end{tabular}
\label{table2}
\end{center}
\end{table}

\section{Performance Evaluation}\label{S5}
\subsection{Simulation Setup}\label{S5-1}

\begin{figure*}[!t]
	\centering
	\subfigure[]{
		\begin{minipage}{3.2in}
			\centering
			\includegraphics[width=3.2in]{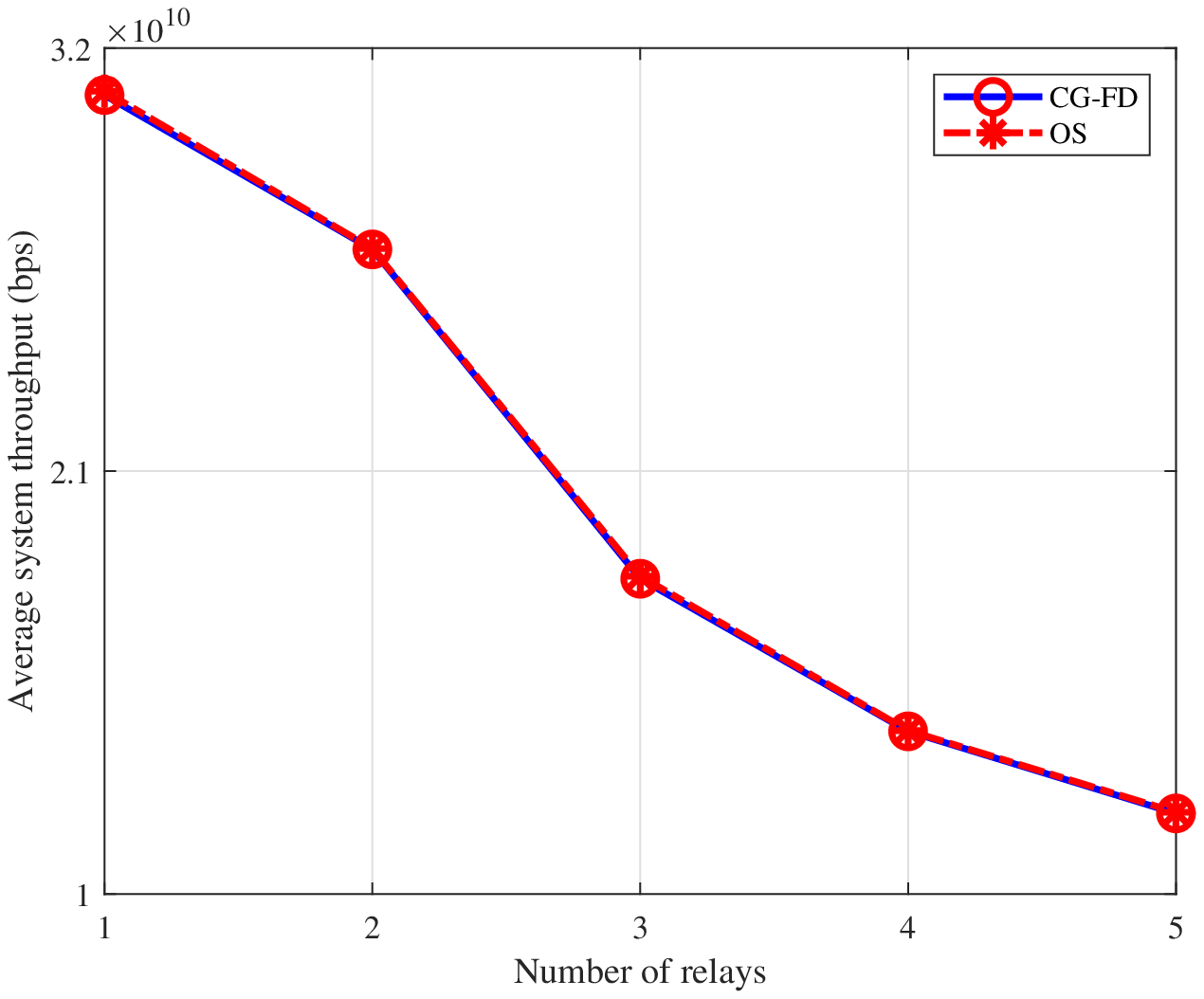}
		\end{minipage}
	}
	\subfigure[]{
		\begin{minipage}{3.2in}
			\centering
			\includegraphics[width=3.2in]{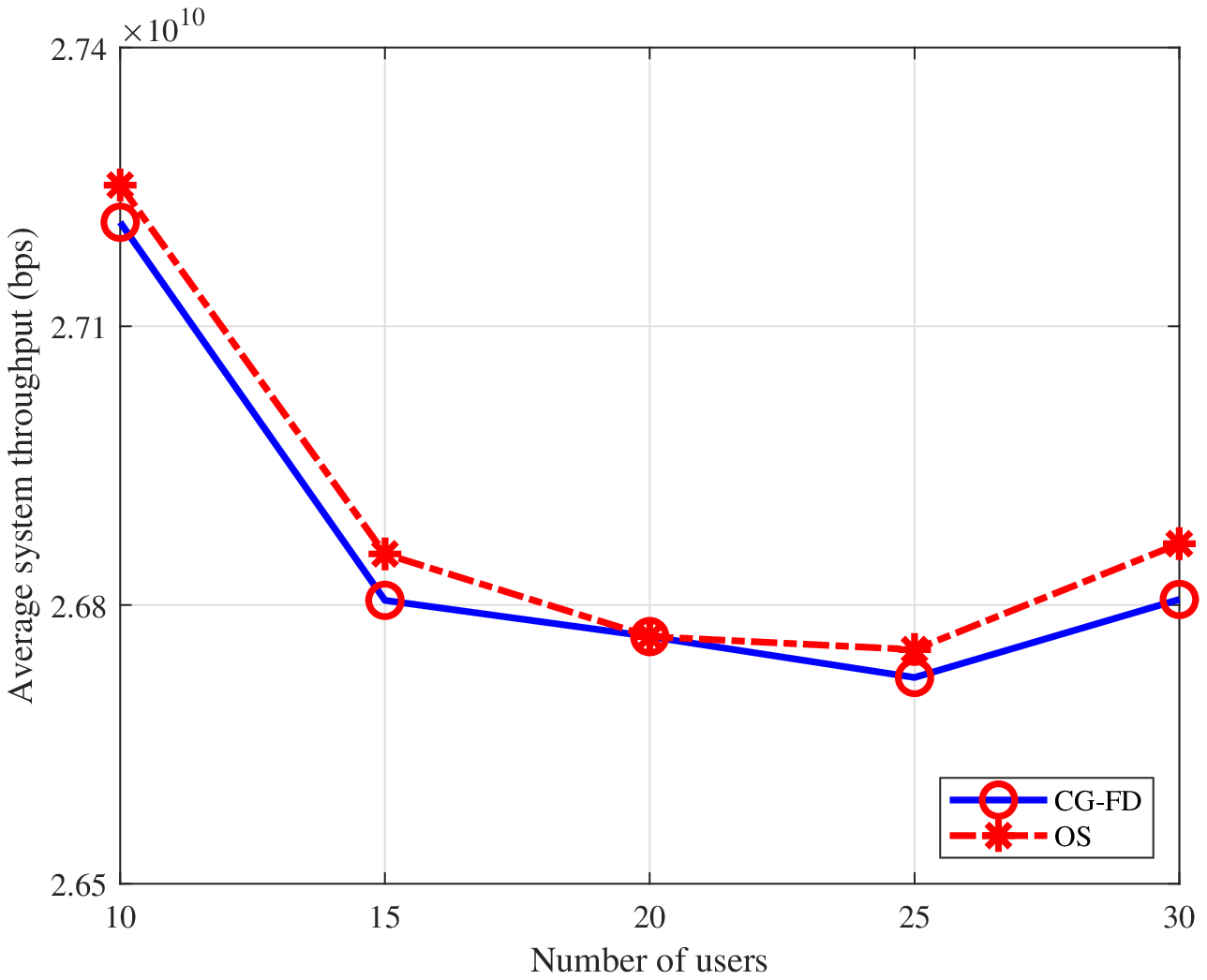}
		\end{minipage}
	}
	\caption{Average system throughput comparison of the proposed scheme CG-FD and the optimal solution (OS) under (a) different numbers of MRs; and (b) different numbers of users.}
\label{fig3}
\end{figure*}

In the simulations reported in this section, we use the directional antenna model in IEEE 802.15.3c for the mmWave communication system~\cite{J25}. On basis of the model, $G(\theta)$, the directional antenna gain denoted in units of decibel (dB), is given by
\begin{align}
	G({\theta})=\left\{
	\begin{array}{ll}
		G_{0}-3.01\cdot\left({\frac{2\theta}{\theta_{-3dB}}}\right)^2, & {0^{\circ}\leq {\theta} \leq {{\theta}_{ml}}/2} \\
		G_{sl}, & {{{\theta}_{ml}}/2 \leq {\theta} \leq 180^{\circ}}, \\
	\end{array} \right. \label{eq17}
\end{align}
where $\theta$ denotes an angle ranging from $0^{\circ}$ to $180^{\circ}$, ${\theta}_{ml}$ is the main lobe beamwidth, and ${\theta}_{-3dB}$ represents the angle of the half-power beamwidth.
Moreover, ${\theta}_{ml}=2.6\cdot {\theta}_{-3dB}$. $G_{0}$ denotes the maximum antenna gain, which can be expressed as
\begin{equation}
G_{0}=10\log\left(\frac{1.6162}{\sin(\theta_{-3dB}/2)}\right)^2.
\label{eq18}
\end{equation}
The side lobe gain $G_{sl}$ is given by
\begin{equation}
G_{sl}=-0.4111\cdot \ln(\theta_{-3dB})-10.579.
\label{eq19}
\end{equation}
The other relevant simulation parameters are listed in Table~\ref{table2}.

In the evaluation, we use the following two performance metrics to demonstrate the benefits of our proposedapproach.
\begin{itemize}
\item \textbf{Average system throughput}: The achieved average system throughput of all the users associated with the BS and MRs.

\item \textbf{Single user throughput}: The achieved throughput of single user associated with the BS or MRs.

\item \textbf{Number of switch operations}: Average number of switch operations to reach Nash-stable using the proposed algorithm.
\end{itemize}

For demonstrating the benefits of the proposed coalition game algorithm, we compare it with the following two baseline schemes. For simplicity, our proposed algorithm is termed \textbf{Coalition Game-Full Duplex} (CG-FD).
The two baseline schemes are:
\begin{itemize}
\item \textbf{Coalition Game-Half Duplex} (CG-HD), where the MRs operate in the half duplex mode. Thus, the bandwidth resource that the users can occupy is decreased in comparison with the case of FD under the same number of users.
\item \textbf{Non-cooperative Coalition Game-Full Duplex} (NCCG-FD), where the coalitions does not cooperate with each other i.e., the preference order defined in~(\ref{eq15}) is changed to $C_{i} \ {\succ}_{l} \ C_{{i}^{\prime}} \Longleftrightarrow U(C_{i}) > U(C_{{i}^{\prime}})$, which can not guarantee that as user $l$ leaves $C_{{i}^{\prime}}$ to join $C_{i}$, the average system throughput will increase.
\end{itemize}

\subsection{Comparison with the Optimal Solution}\label{S5-2}

To show superiority of our proposed scheme, we plot the system performance comparison of our proposed scheme and the optimal solution(OS) in terms of average system throughput under different users and MRs in Fig.~\ref{fig3}. Since the OS is achieved
through exhaustive search, it is of extreme complexity. Thus, the simulations are operated in the small number of users. In Fig.~\ref{fig3}(a), the number of users is to 20 and the number of MRs is varied from 1 to 5. Then, the number of MRs is set to 2 and the number of users is varied from 10 to 30 in Fig.~\ref{fig3}(b). We use average deviation of average system throughput
between our proposed scheme and optimal solution to do quantitative analysis and it is computed as follows.
\begin{equation}
\mbox{Average \ Deviation}= \frac{1}{5}\sum\limits_{p=1}^{5}{\frac{U_{OS}(p)-U_{CG-FD}(p)}{U_{OS}(p)}}, \label{eq26}
\end{equation}
where $U_{OS}(p)$ and $U_{CG-FD}(p)$ are the average system throughput of OS and CG-FD. From Fig.~\ref{fig3}(a) and Fig.~\ref{fig3}(b), it is obvious that our proposed scheme performs little worse than the OS and the result of the two schemes is very close.

\subsection{Comparison with Baseline Schemes}\label{S5-3}
\subsubsection{Impact of $\beta$}

Recall that ${\beta}$ denotes the SI cancellation capability. We plot the system performance of the three schemes in terms of average system throughput under different ${\beta}$. The number of users is set to 40 and the number of MRs to 2. The abscissa $x$ is the magnitude of ${\beta}$. From Fig.~\ref{fig4}, it is obvious that the CG-FD scheme performs best among the three schemes. In addition, when ${\beta}$ varies, the result of the CG-HD scheme does not change, since the half duplex algorithm has nothing to do
with SI cancellation. For the CG-FD scheme and the NCCG-FD scheme, the SI cancellation level influences their performance significantly. When ${\beta}$ is large, which indicates that the self-interference cancellation ability is weak, the performance of CG-FD and NCCG-FD are both worse than CG-HD. When ${\beta}$ becomes smaller, the RSI influences the transmission rate little. As a result, the throughputs of the two FD schemes become larger. When the ${\beta}$ decreases beyond $10^{-13}$, the average system throughputs of the two FD schemes do not increase further. At this level of ${\beta}$, the RSI is effectively suppressed and it has negligible impact on the transmission rate. In the remaining simulations in this section, we set ${\beta}$ to $10^{-13}$, which
ensures good performance for the two FD schemes.

\begin{figure}[!t]
\begin{center}
\includegraphics*[width=3.2in]{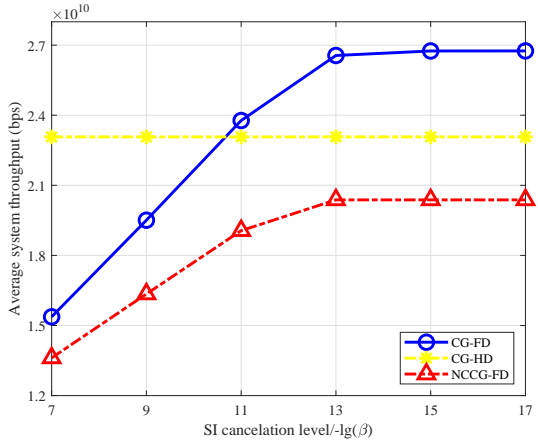}
\end{center}
\caption{Average system throughput of the three user association schemes with different ${\beta}$ values.}
\label{fig4}
\end{figure}

\subsubsection{Impact of Number of MRs}

In Fig.~\ref{fig5}, we plot the system performance of the three schemes in terms of average system throughput under the number of users to be 40 and changing the number of MRs from 1 to 6. In Fig.~\ref{fig5}, it is obvious that the CG-FD scheme has the best performance than the other two schemes. With more MRs, the average system throughput of all the schemes decrease. Since when the number of MRs is increased, there will be less bandwidth allocated to each MR. The transmission rate achieved by users who are associated with the MRs will be smaller and the average system throughput of the algorithms CG-FD, CG-HD and NCCG-FD will thus
decrease.

\begin{figure}[!t]
\begin{center}
\includegraphics*[width=3.2in]{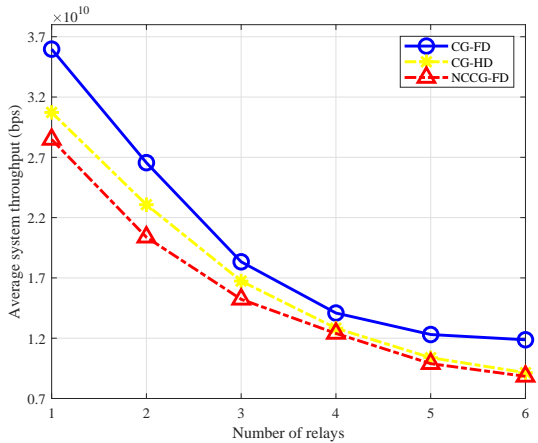}
\end{center}
\caption{Average system throughput of the three user association schemes under different numbers of MRs.}
\label{fig5}
\end{figure}

\subsubsection{Impact of Number of Users}

In Fig.~\ref{fig6}, we plot the system performance of the three schemes in terms of average system throughput under the number of MRs is 4 and the number of users is varied from 25 to 50. As users increasing, the average system throughputs of the three schemes
remain nearly constant. Moreover, the CG-FD scheme always has the best system performance in terms of average system throughput than the other two baseline schemes. In comparison with the CG-HD scheme, the CG-FD scheme allows the MRs to simultaneously communicate with users and the BS to achieve a higher system throughput, while the CG-HD scheme has to allocate bandwidth resource to the MRs for communicating with the BS. For NCCG-FD, it has the worst system performance than the other two schemes. That is because this scheme can not guarantee that as users leave the old coalition to join the new one, the average system throughput will increase. In addition, we plot the system performance of the three schemes in terms of single user throughput under the same setting in Fig.~\ref{fig6}. Fig.~\ref{fig7}(a) and Fig.~\ref{fig7}(b) respectively depict the throughput of single user associated with the BS and MRs. From the two figures, we can observe that as users increasing, the three schemes all decrease. That is because when the number of users increases, the bandwidth resource that can be utilized by single user decreases, so the single user throughput decreases. However, the proposed CG-FD scheme performs the best than the other two baseline schemes in terms of single user throughput. Also, we note that the two FD schemes all perform better than the HD scheme in Fig.~\ref{fig7}(b) than Fig.~\ref{fig7}(a). The reason behind this phenomenon is that the CG-HD scheme has to allocate bandwidth resource to the MRs for communicating with the BS while the CG-FD scheme and the NCCG-FD scheme do not need.

\begin{figure}[!t]
	\begin{center}
		\includegraphics*[width=3.2in]{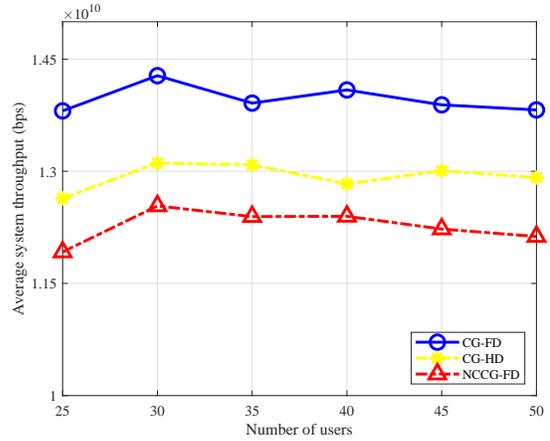}
	\end{center}
	\caption{Average system throughput of the three user association schemes under different numbers of users.}
	\label{fig6}
\end{figure}

\begin{figure}[!t]
	\centering
	\subfigure[]{
		\begin{minipage}{3.2in}
			\centering
			\includegraphics[width=3.2in]{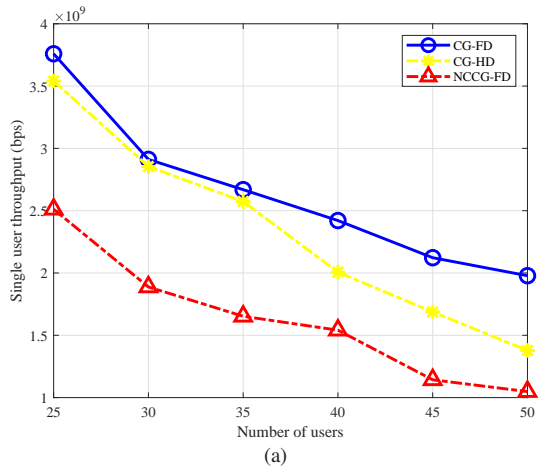}
		\end{minipage}
	}

	\subfigure[]{
		\begin{minipage}{3.2in}
			\centering
			\includegraphics[width=3.2in]{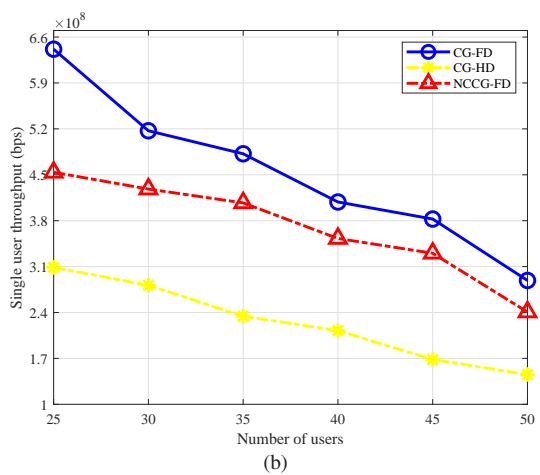}
		\end{minipage}
	}
	\caption{Single user throughput comparison of the three user association schemes under different numbers of users (a) single user associated with the BS; and (b) single user associated with MRs.}
\label{fig7}
\end{figure}

\subsubsection{Impact of $P_b$}

In Fig.~\ref{fig8}, the number of users is 40 and the number of MRs is 2. We plot the system performance of the three schemes in terms of average system throughput while the transmit power of the BS is varied from 20 dBm to 45 dBm. In Fig.~\ref{fig8}, it is obvious that as the BS transmit power increases, all the three schemes perform better. This is because when the transmit power of the BS is smaller, there will be fewer users associated with the BS and more users associated with the MRs, which leads to less bandwidth resource utilized by the users, and the transmission rate of each user will be smaller and the average system throughput decreases.

\begin{figure}[!t]
	\begin{center}
		\includegraphics*[width=3.2in]{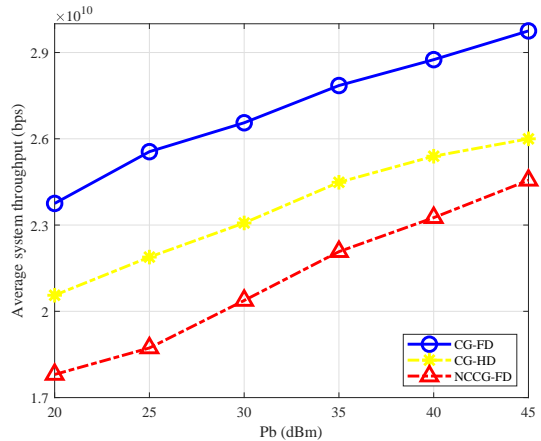}
	\end{center}
	\caption{Average system throughput of the three user association schemes under different $P_{b}$ values.}
	\label{fig8}
\end{figure}

\subsubsection{Impact of $P_n$}

In Fig.~\ref{fig9}, the parameters are set similarly to that in Fig.~\ref{fig8}. We plot the system performance of the three schemes in terms of average system throughput while the MR transmit power is changed from 5 dBm to 25 dBm. It is obvious that as the MR transmit power increasing, the average system throughputs of all the three schemes increase. This is because when the MR transmit power becomes larger, there will be more users associated with the MRs, and the burden on the BS can be alleviated. Thus, the average system throughput increase. Comparing the behaviors of the three schemes, the proposed CG-FD scheme achieves the highest system average throughput, while NCCG-FD has the worst performance.

\begin{figure}[!t]
	\begin{center}
		\includegraphics*[width=3.2in]{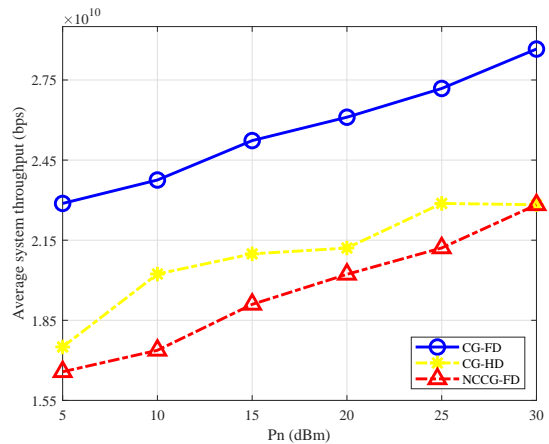}
	\end{center}
	\caption{Average system throughput of the three user association schemes under different $P_{n}$ values.}
	\label{fig9}
\end{figure}

\subsubsection{Impact of $\alpha$}

In Fig.~\ref{fig10}, we set the number of MRs to 2 and the number of users to 40, and vary the value of ${\alpha}$ from 0.1 to 0.6. We can see that with the increase of bandwidth allocation proportion of the BS, the average system throughput of all the schemes decrease gradually. Since we mainly take the users on the train into consideration, the larger the bandwidth allocation proportion of the BS, the less bandwidth resource can be utilized by users associated with MRs, and then the transmission rate of each user can be smaller. Thus, the average system throughput of the algorithms CG-FD, CG-HD and NCCG-FD will decrease. The CG-FD scheme also outperforms the two baseline schemes and NCCG-FD has the worst system performance.

\begin{figure}[!t]
	\begin{center}
		\includegraphics*[width=3.2in]{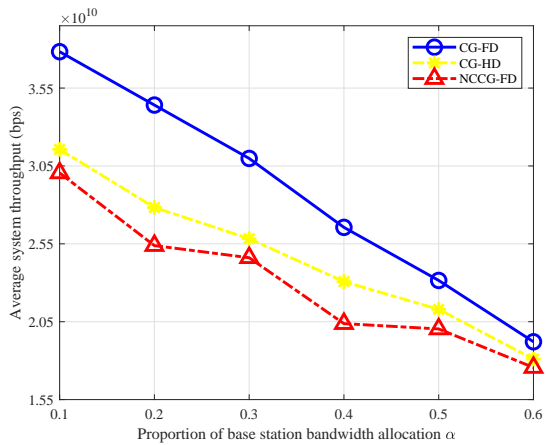}
	\end{center}
	\caption{Average system throughput of the three user association schemes under different ${\alpha}$ values.}
	\label{fig10}
\end{figure}

\subsection{Complexity}\label{S5-4}

The level of complexity is demonstrated by the number of switch operations. In Fig.~\ref{fig11}, we plot the system performance of
the three schemes in terms of the number of switch operations under the number of MRs is 2 and 4, in other words, the number of coalitions is 3 and 5, and the number of users is varied from 25 to 50. From Fig.~\ref{fig10}, it is obvious that the number of switch operations increase as the number of users increase. As more users to be associated, there are more switch operations. However, regardless of the number of users is, the number of switch operations are finite and can be inferred that the CG-FD scheme is of low complexity.

\begin{figure}[!t]
	\begin{center}
		\includegraphics*[width=3.2in]{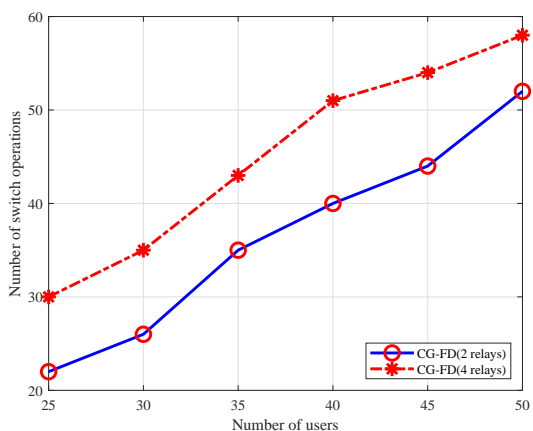}
	\end{center}
	\caption{Number of switch operations under different numbers of MRs.}
	\label{fig11}
\end{figure}

\section{Conclusions}\label{S6}

In this paper, we focused on the user association optimization problem for mmWave mobile relay systems with FD transmissions in HSR scenarios. To achieve the goal of maximizing the average system throughput, we adopted a tool called coalition game to optimize the user association problem and further proposed an algorithm based coalition game. In addition, we theoretically proved that our proposed algorithm was convergent after a limited number of switch operations and could reach Nash-stable. Moreover, lots of simulations under various system parameters were done to show the benefits of our proposed CG-FD scheme in terms of average system throughput and complexity in comparison with two baseline schemes. For future work, we will take D2D communications into consideration and jointly optimize spectrum allocation and user association.

\end{document}